\pgfplotsset{compat=newest,
    /pgfplots/ybar legend/.style={
    /pgfplots/legend image code/.code={%
       \draw[##1,/tikz/.cd,yshift=-0.25em]
        (0cm,0cm) rectangle (3pt,0.8em);},
   },
}
\newtheorem{lemma}{Lemma}
\newtheorem{prop}[lemma]{Proposition}
\theoremstyle{remark}
\theoremstyle{theorem}
\theoremstyle{theorem}\newtheorem{defn}[lemma]{Definition}
\newcommand{\paren}[1]{\big(#1\big)}
\newcommand{\set}[1]{\left\{#1\right\}}
\newcommand{\abs}[1]{\left|#1\right|}
\newcommand{\normm}[1]{{\left\vert\kern-0.25ex\left\vert\kern-0.25ex\left\vert #1
    \right\vert\kern-0.25ex\right\vert\kern-0.25ex\right\vert}}
\DeclareMathOperator{\spann}{span}
\DeclareMathOperator{\diag}{diag}
\newcommand{\R}{\mathbb R}
\newcommand{\eps}{\epsilon}
\newcommand{\lam}{\lambda}
\newcommand{\ul}[1]{\underline{#1}}
\newcommand{\ol}[1]{\overline{#1}}
\newcommand{\goesto}{\rightarrow}
\newcommand{\bs}{\backslash}
\newcommand{\calB}{\mathcal{B}}
\newcommand{\calC}{\mathcal{C}}
\newcommand{\calD}{\mathcal{D}}
\newcommand{\calE}{\mathcal{E}}
\newcommand{\calF}{\mathcal{F}}
\newcommand{\calG}{\mathcal{G}}
\newcommand{\calL}{\mathcal{L}}
\newcommand{\calN}{\mathcal{N}}
\newcommand{\calP}{\mathcal{P}}
\newcommand{\calR}{\mathcal{R}}
\newcommand{\n}{\mathfrak{n}}
\newcommand{\bff}[1]{{\bf #1}}
\definecolor{myred}{RGB}{202,0,32}
\definecolor{myorange}{RGB}{244,165,130}
\definecolor{myviolet}{RGB}{194,165,207}
\definecolor{mycyan}{RGB}{146,197,222}
\definecolor{myblue}{RGB}{5,113,176}
\definecolor{mygreen}{RGB}{127,191,123}
\definecolor{mytile}{RGB}{27,120,55}
\title{Adaptive Network Response to Line Failures in Power Systems}
\author{Chen~Liang, Linqi~Guo, Alessandro~Zocca, Steven H.~Low,~and~Adam~Wierman
\thanks{This work has been supported by Resnick Fellowship, Linde Institute Research Award, NWO Rubicon grant 680.50.1529, 
NSF through grants CCF 1637598, ECCS 1619352, ECCS 1931662, CNS 1545096, CNS 1518941, CPS ECCS 1739355, CPS 154471.}
\thanks{CL, LG, SHL, AW are with the Department of Computing and Mathematical Sciences, California Institute of Technology, Pasadena,
CA, 91125, USA. Email: \texttt{\{cliang2, lguo, slow, adamw\}@caltech.edu}. AZ is with the Department of Mathematics of the Vrije Universiteit Amsterdam, 1081HV, Netherlands. Email: \texttt{a.zocca@vu.nl}.}}
\begin{document}

\maketitle
\thispagestyle{empty}
\pagestyle{empty}

\begin{abstract}
Transmission line failures in power systems propagate and cascade non-locally. In this work, we propose an adaptive control strategy that offers strong guarantees in both the mitigation and localization of line failures. Specifically, we leverage the properties of network \emph{bridge-block decomposition} and a frequency regulation method called the \emph{unified control}. If the balancing areas over which the unified control operates coincide with the bridge-blocks of the network, the proposed strategy drives the post-contingency system to a steady state where the impact of initial line outages is localized within the areas where they occurred whenever possible, stopping the cascading process. When the initial line outages cannot be localized, the proposed control strategy provides a configurable design that progressively involves and coordinates more balancing areas.
We compare the proposed control strategy with the classical Automatic Generation Control (AGC) on the IEEE 118-bus and 2736-bus test networks. Simulation results show that our strategy greatly improves overall reliability in terms of the $N-k$ security standard, and localizes the impact of initial failures in the majority of the simulated contingencies. Moreover, the proposed framework incurs significantly less load loss, if any, compared to AGC, in all our case studies.
\end{abstract}

\section{Introduction}\label{section:intro}
Transmission line failures in power systems propagate both locally and non-locally, making it challenging to design control methods that reliably prevent and control line failure cascades in power networks~\cite{bienstock2007integer,hines2007controlling}. Current industry practice for failure prevention and mitigation mostly relies on $N-1$ security constrained OPF as well as simulation-based contingency analysis~\cite{baldick2008initial}, which are often implemented in tertiary control and operates on a slow timescale. As a result, such control usually prescribes a conservative operation point that strictly prohibits line overloads in all possible scenarios.

Since tertiary control can take effect 5 minutes to more than an hour after the disturbance, most literature on cascading failure analysis adopts the assumption that injections remain unchanged after a line failure if the post contingency network remains connected (and are changed only after a bridge failure according to a generic balancing rule that re-balances power in each island)~\cite{bienstock2007integer,bienstock2010n-k,bernstein2014power,soltan2015analysis}. This assumption, however, is unrealistic and tends to be pessimistic: it does not take into account of frequency control mechanisms that adjust injections of controllable generators and loads immediately in response to line outages, on a faster timescale than that of post-contingency line tripping. 

In this paper, we augment the existing steady state cascading failure models with frequency control dynamics that affect power flow redistribution in the new equilibrium post contingency. This integrated failure model is not only more realistic, but also offers additional means to mitigate cascading failure through better design of the frequency control mechanism. Our proposed control strategy builds upon this extra freedom and reacts to line outages on a timescale of minutes.

\textbf{Contributions of this paper:} \emph{We integrate a distributed frequency control strategy with a tree-partitioned network to provide provable failure mitigation and localization guarantees on line failures.} This strategy operates on a different timescale and supplements current practice, improving both grid reliability and operation efficiency. To the best of our knowledge, this is the first attempt to leverage results from the frequency regulation literature in the context of cascading failures, bringing new perspectives and insights to both literature. Our proposed strategy guarantees that (a) whenever it is feasible to avoid it, line failures do not propagate, and (b) the impact of line failures is localized as much as possible in a manner configurable by the system operator. A preliminary version of this work is presented in~\cite{guo2019less}. In this paper we include detailed proofs that were omitted in~\cite{guo2019less} and extend our simulations to further illustrate the performance of our approach.

We introduce the main idea of this new mitigation strategy in Section \ref{section:control_strategy}, which makes use of the so-called \textit{Unified Controller (UC)}, a recent mechanism developed in the frequency regulation literature \cite{zhao2014design,li2016connecting,zhao2016unified,mallada2017optimal,zhao2018distributed}. 
We specifically leverage the ability of UC to enforce line limits on a faster timescale than thermal line failure dynamics whenever possible.
Our design revolves around the properties that emerge when the balancing areas that UC manages are connected in a tree structure. More specifically, in Section \ref{section:non_critical}, we characterize how UC responds to an initial failure, and prove that any \textit{non-critical failure} is automatically mitigated and localized. Later, in Section~\ref{section:critical}, we discuss how the system operator can configure its mitigation strategy to minimize the impact of \textit{critical failures}, and show that UC can be extended to detect such scenarios as part of its normal operation.

In order to establish these results, we propose an integrated failure propagation model in Section~\ref{section:failure_model}, which lies at the interface between the fast-timescale of the frequency dynamics and the slow-timescale of the line tripping process. We further prove new results on the UC optimization problem with the spectral representation of DC power flow equations. Lastly, we apply classical results from convex optimization to show that critical failures can be detected in a distributed fashion.

In Section \ref{section:case_studies}, we compare the proposed control strategy with 
classical Automatic Generation Control (AGC) using the IEEE 118-bus and 2736-bus test networks. We demonstrate that by switching off only a small number of transmission lines and adopting UC as the frequency controller, one can significantly improve overall grid reliability in terms of the $N-k$ security standard. Moreover, in a majority of the load profiles that are examined, our control strategy localizes the impact of initial failures to the balancing area where they occur, leaving the operating points of all other areas unchanged. This decoupling across balancing areas is important in practice. Lastly, we highlight that when load shedding is necessary, the proposed strategy incurs significantly smaller load loss.

Load shedding as a mitigation approach for cascading failures has been studied in the literature, e.g., 
using reinforcement learning \cite{Jung2002}, by shedding only the interruptible part of each load \cite{Faranda2007}, by adaptively using affine control based on observed states \cite{bienstock2011control},
by formulating it as a DC OPF problem \cite{Pahwa2013}, or by treating the cascading process as a discrete-time optimal control problem \cite{BaSavla2019}. In this paper we use the same DC power flow model as~\cite{bienstock2011control, Pahwa2013, BaSavla2019}, but, unlike these works, we do not propose separate load shedding schemes to be implemented on a slow (power flow) timescale. Instead, load and generation control under the UC framework is performed as part of frequency regulation, during both normal operation and a contingency.



\section{An Integrated Failure Propagation Model}\label{section:failure_model}

In this section, we present an integrated cascading failure model that incorporates frequency control dynamics and generalizes the steady-state DC failure model. 
%
\subsection{Fast-timescale Dynamics}\label{section:model}
Consider a power transmission network described by a graph $\calG=(\calN,\calE)$, where $\calN$ is the set of buses and $\calE \subset \calN\times \calN$ are transmission lines. Using the notation in Table~\ref{table:notations}, the post-contingency linearized frequency dynamics are:
\begin{subequations}\label{eqn:swing_and_network_dynamics}
\begin{IEEEeqnarray}{rCll}
\dot\theta_j & = & \omega_j, & \quad  j\in\calN \label{eqn:swing_dynamics.1}\\ 
  M_j\dot{\omega}_j &=& r_j + d_j - D_j\omega_j - \sum_{e\in\calE}C_{je}f_e,&\quad j\in\calN \label{eqn:swing_dynamics}\\
  {f}_{ij}&=&B_{ij}(\theta_i-\theta_j),\quad& (i,j)\in\calE. \label{eqn:network_flow_dynamics}
\end{IEEEeqnarray}
\end{subequations}
The above differential equations model the fast-timescale response of the system to a transmission line failure. Note that we assume generator voltage control is local and converges at a faster timescale than generator-load frequency control.  Therefore voltages are assumed to be fixed at their nominal values at the frequency control timescale.

The post-contingency injection deviation $p_j(t):=r_j + d_j(t)$ is the sum of the post-contingency disruption $r_j$ and the system response $d_j(t)$. 
The disruption $r:=(r_j, j\in\calN)$ comes from the effect of transmission line failures. In particular, suppose a failure of line $(s,t)$ with pre-contingency flow $f_{st}^\text{pre}$ happens, the post-contingency disruption is $r_s=f_{st}^\text{pre}$, $r_t = -f_{st}^\text{pre}$ and $r_j=0$ for all other entries. 
The vector $d(t):=(d_j(t), j\in\calN)$ models frequency control and their values are 
determined by a feedback control mechanism (a non-controllable constant-power load is simply a special case where the controls are set as $d_j(t) \equiv d_j$). We assume in this paper that the feedback controller is stabilizing and drives the closed-loop system towards an equilibrium as long as the post-contingency disruption $r$ can be feasibly mitigated (see Section \ref{section:critical} for more discussion).

\begin{table}[t]
\def\arraystretch{1.1}
\caption{Variables associated with buses and transmission lines.}\label{table:notations}
\vspace{.2cm}

\begin{tabular}{|m{2.7cm}|m{5cm}|}
\hline
$\theta:=(\theta_j, j\in\calN)$ & bus voltage angle deviations from pre-contingency values\\
\hline
$\omega:=(\omega_j, j\in\calN)$ & bus frequency deviations from pre-contingency values\\
\hline
$r:=(r_j, j\in\calN)$ & system disturbances\\
\hline
$d:=(d_j, j\in\calN)$ & power injection/controllable load deviations from pre-contingency values for generator/load buses\\
\hline
$p:=(p_j, j\in\calN)$ & aggregate post-contingency injection deviation\\
\hline
$\ol{d}_j, \ul{d}_j, j\in\calN$ & upper and lower limits for the adjustable injection $d_j$\\
\hline
$D_j\omega_j, j\in\calN$ & aggregate generator damping for generator buses; aggregate load frequency
response for load buses\\
\hline
$M_j, j\in\calN$ & inertia constants\\
\hline
$f:=(f_e, e\in\calE)$ & branch flow deviations from pre-contingency values\\
\hline
$\ol{f}_e, \ul{f}_e, e\in\calE$ & upper and lower limits for branch flow deviations\\
\hline
$n:=|\calN|$ & number of buses \\
\hline
$m:=|\calE|$ & number of transmission lines \\
\hline
$C\in \R^{n\times m}$ & post-contingency incidence matrix of $\calG$: $C_{je}=1$ if $j$ is the source of $e$, $C_{je}=-1$ if $j$ is the destination of $e$, and $C_{je}=0$ otherwise\\
\hline
$B:=\diag(B_e, e\in\calE)$ & branch flow linearization coefficients that depend on line susceptances, nominal voltage magnitudes and reference phase angles \\
\hline
\end{tabular}
\vspace{-.2cm}
\end{table}

\begin{defn}
A state $x^*:=(\theta^*, \omega^*, d^*, f^*)\in\R^{3n+m}$ is said to be a \textbf{\emph{closed-loop equilibrium}} or
simply an \textbf{\emph{equilibrium}} of \eqref{eqn:swing_and_network_dynamics} if the right hand sides of \eqref{eqn:swing_dynamics.1}\eqref{eqn:swing_dynamics} are zero and \eqref{eqn:network_flow_dynamics} is satisfied at $x^*$.
\end{defn}
The frequency dynamics \eqref{eqn:swing_and_network_dynamics} implies that any equilibrium configuration $x^*$ satisfies
\begin{IEEEeqnarray*}{C}
w^* = 0, \quad  p^* =  r+d^* = Cf^*, \quad  f^* = BC^T\theta^*,
\end{IEEEeqnarray*}
In other words, $x^*$ satisfies the DC power flow model.\footnote{In primary frequency control literature (see \cite{zhao2014design,zhao2016unified} for instance), the right hand side of \eqref{eqn:swing_dynamics.1} is not required to be zero for an equilibrium point $x^*$. We impose this requirement on \eqref{eqn:swing_dynamics.1} here as our discussion focuses on controllers that achieve secondary frequency control and thus $\omega^*=0$ always holds. Our model and results can be readily extended to the case where $\omega^*\neq 0$; see Appendix \ref{section:previous_model} for more details.}

The equilibrium to which the closed-loop system \eqref{eqn:swing_and_network_dynamics} converges thus determines the post-contingency DC power flow solution and can in turn impact how failures propagate in the network. A key insight from~\cite{zhao2014design,li2016connecting,zhao2016unified,mallada2017optimal,zhao2018distributed} is that the closed-loop equilibrium $x^*$ of \eqref{eqn:swing_and_network_dynamics} is also an optimal solution of a certain DC-based optimization problem that can be determined explicitly.  
Different frequency controllers $d(t)$ induce different dynamics \eqref{eqn:swing_and_network_dynamics}, whose closed-loop equilibria solve optimization problems with corresponding objective functions and constraints. As such, different frequency controllers can alternatively be modelled by the underlying optimization problems that their equilibria solve. 

We remark that the model \eqref{eqn:swing_and_network_dynamics} can be extended to include load buses $j$ where $M_j=0$.  This system of differential algebraic equations assumes that the load dynamics evolve at a faster timescale than generator dynamics and are described by power flow equations. Including load buses does not change the stability properties of the closed-loop system, though the argument is slightly more complicated~\cite{zhao2014design,mallada2017optimal}.

\subsection{Failure Propagation}\label{section:propagation}
In full generality, the failure model and the control strategy we introduce later apply to both generator failures and line failures, but, to simplify the presentation, in this paper we focus only on the latter.
We describe the cascading failure process by keeping track of the set of outaged lines $\calB(\n)\subset \calE$ over stages $\n\in\set{1,2,\ldots, N}$ at steady state. Following a line outage, we assume that the system evolves on a fast timescale according to the frequency dynamics
\eqref{eqn:swing_and_network_dynamics} during the transient phase. When it eventually converges to a closed-loop equilibrium
, overloaded lines are tripped and the cycle repeats, as illustrated in Fig.~\ref{fig:model_interplay}. 

More specifically, for each stage $\n\in \set{1,2,\ldots, N}$, the system evolves according to the dynamics \eqref{eqn:swing_and_network_dynamics} on the topology $\calG(\n)$, and converges to an equilibrium 
point $x^*(\n)=(\theta^*(\n), \omega^*(\n), d^*(\n), f^*(\n))$ that solves an optimization problem over $\calG(\n)=\left(\calN,\calE\setminus \calB(\n)\right)$.
If at equilibrium all branch flows $f^*(\n)$ are within the corresponding line limits, then $x^*(\n)$ is a secure operating point and the cascade stops. Otherwise, let $\calF(\n)$ be the subset of lines whose branch flows exceed the corresponding line limits. The lines in $\calF(\n)$ operate above their limits at steady state, so we assume they trip at the end of stage $\n$ and set $\calB(\n+1)=\calB(\n)\cup \calF(\n)$. Line overloads during the transient phase before the system converges to $x^*(\n)$ are considered tolerable because the transient dynamics in \eqref{eqn:swing_and_network_dynamics} does not last long enough to overheat a line \cite{zhao2016unified} (spanning only seconds to a few minutes). This process then repeats for the subsequent stages.
We remark that our model captures a simplified line trip dynamic where, after an initiating transmission failure (consisting of one or more line trips), the next set of (one or more) line trips does not occur until the frequency control dynamics has converged. In particular our model does not capture the finer thermal dynamics that determines the timing of line trips based on line currents and ambient temperatures.

The crux of our failure propagation model lies in the interplay between the slow-timescale line tripping process and the fast-timescale dynamics. By explicitly modeling fast-timescale frequency control dynamics as part of the cascading process, our model offers more flexibility in controller design. Different choices of $d(t)$ induce different cascading failure processes. For instance, as shown in Appendix~\ref{section:previous_model}, if we adopt droop control for $d(t)$, the failure model in~\cite{soltan2015analysis, yan2015cascading} (where injections do not change after a non-cut failure, and a bridge failure impact the injections following a certain power balancing rule $\calR$) can be readily recovered. As another example, if AGC is adopted for $d(t)$, the cascading process will unfold in a way where injections and line flows are changed even after a non-cut failure. Since traditional AGC does not enforce line limits (congestion is managed on a slow timescale), some lines may carry flows above their thermal limits and are tripped subsequently.

This integrated model offers an additional means to mitigate cascading failures through a better design of the frequency control mechanism on a fast timescale. Our proposed approach leverages this extra freedom and adopts a recent frequency control approach known as Unified Controller (UC) for $d(t)$. In contrast to traditional AGC, UC drives the closed-loop system to an equilibrium that respects line limits whenever possible. We will show in Section \ref{section:non_critical} that it will in fact localize the impact even when the outaged lines disconnect the network.

\begin{figure}[t]
\vspace{.1cm}
\centering
\includegraphics[width=0.4\textwidth]{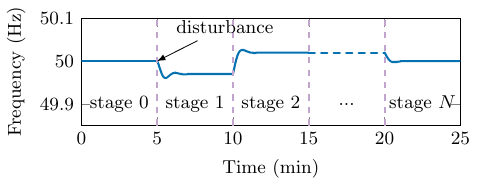}
\caption{An illustration of the failure propagation model.}
\label{fig:model_interplay}
\vspace{-.3cm}
\end{figure}

\section{The Bridge-block Decomposition and The Unified Controller}\label{section:block_controller}
The bridge-block decomposition and the unified controller have recently emerged as two important tools for grid reliability \cite{part1,part2,zhao2016unified}. The two concepts operate on different timescales to improve the power system robustness: the bridge-block decomposition aims to localize the failure propagation, while the unified controller aims to stabilize a disturbed system. In this section, we review these concepts and elaborate on how they can be integrated as a novel control framework for failure localization and mitigation.

\subsection{Bridge-block Decomposition}
Given a power network $\calG=(\calN,\calE)$, a \emph{partition} of $\calG$ is defined as a finite collection $\calP=\{\calN_1, \calN_2, \cdots, \calN_k\}$ of nonempty and disjoint subsets of $\calN$ such that $\bigcup_{i=1}^k \calN_i=\calN$. For a partition $\calP$, each edge can be classified as either a \emph{tie-line} if the two endpoints belong to different subsets of $\calN$ or an \emph{internal line} otherwise.  

We define an equivalence relation on $\calN$ such that two nodes are in the same equivalence class if and only if there are two edge-disjoint paths connecting them. For this specific partition, the tie-lines connecting different components are exactly the \textit{bridges} (cut-edges) of the graph. We thus refer to this partition as \emph{bridge-block decomposition} $\calP^{\text{BB}}$ of the power network (see Fig.~\ref{fig:bridge-block} for an example). 

It is shown in \cite{part1} that each graph has a unique bridge-block decomposition, which can be found in linear time. In particular, the bridge-block decomposition encodes rich information on failure propagation.
\begin{figure}[t]
\vspace{.1cm}
\centering
\includegraphics[width=0.2\textwidth]{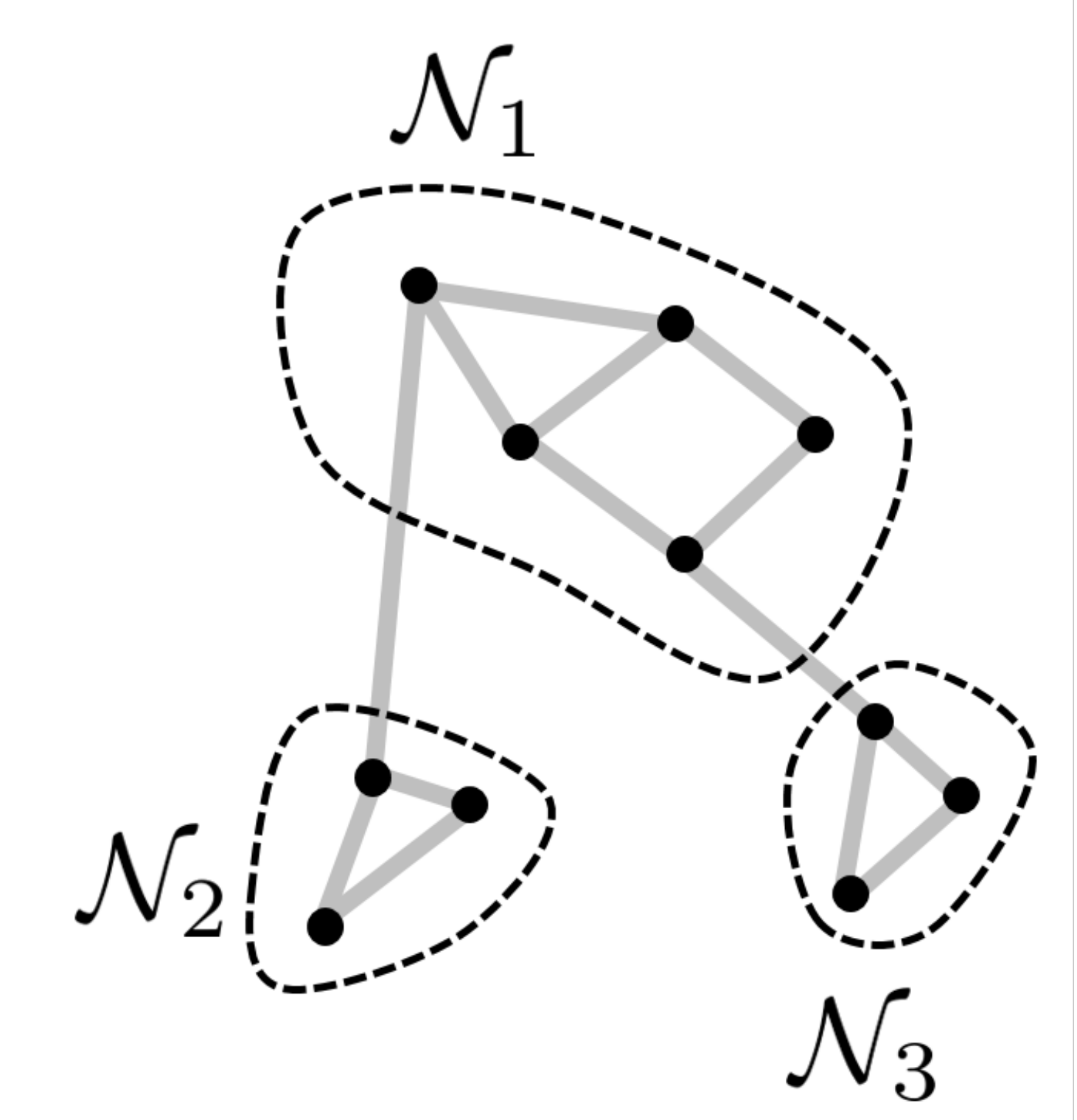}
\caption{Bridge-block decomposition of a graph.}
\label{fig:bridge-block}
\vspace{-.3cm}
\end{figure}

\subsection{Unified Controller (UC)}
UC is a control approach recently proposed in the frequency regulation literature \cite{zhao2014design,li2016connecting,zhao2016unified,mallada2017optimal,zhao2018distributed}. Compared to classical droop control or AGC \cite{bergen2009power}, UC simultaneously integrates primary control, secondary control, and congestion management on a fast timescale.
The key feature of UC is that the \emph{closed-loop} equilibrium of \eqref{eqn:swing_and_network_dynamics} under UC solves the following optimization problem on the \emph{post}-contingency network:
\begin{subequations}\label{eqn:uc_olc}
\begin{IEEEeqnarray}{ll}
\min_{\theta, \omega, d, f} \quad& \sum_{j\in\calN}c_j(d_j) \label{eqn:uc_obj}\\
\hspace{.2cm}\text{s.t.} & \omega=0,\label{eqn:uc_secondary}\\
&r + d - Cf = 0, \label{eqn:uc_balance}\\
& f = BC^T\theta,  \label{eqn:dcflow}\\
& ECf = 0,\label{eqn:uc_ace}\\
& \ul{f}_{e}\le f_{e}\le \ol{f}_{e}, \quad e\in\calE, \label{eqn:line_limit}\\
& \ul{d}_{j}\le d_j\le \ol{d}_j, \quad j \in\calN, \label{eqn:control_limit}
\end{IEEEeqnarray}
\end{subequations}
where $c_j(\cdot)$'s are associated cost functions that penalize deviations from the last optimal dispatch  (and hence attain minimum at $d_j = 0$), \eqref{eqn:uc_secondary} ensures secondary frequency regulation is achieved, \eqref{eqn:uc_balance} guarantees power balance at each bus,  \eqref{eqn:dcflow} is the DC power flow equation, \eqref{eqn:uc_ace} enforces zero area control error \cite{bergen2009power}, \eqref{eqn:line_limit} and \eqref{eqn:control_limit} are the flow and control limits. The matrix $E$ encodes balancing area information as follows. Given a partition $\calP^{\text{UC}}= \set{\calN_1,\calN_2,\cdots, \calN_k}$ of $\calG$ that specifies the balancing areas for secondary frequency control, $E\in\set{0,1}^{|\calP^{\text{UC}}|\times n}$ is defined by $E_{lj}=1$ if bus $j$ is in balancing area $\calN_l$ and $E_{lj}=0$ otherwise. As a result, the $l$-th row of $ECf=0$ ensures that the branch flow deviations on the tie-lines connected to balancing area $\calN_l$ sum to zero.

UC is designed so that its controller dynamics, combined with the system dynamics \eqref{eqn:swing_and_network_dynamics}, form a variant of projected primal-dual algorithms to solve \eqref{eqn:uc_olc}. It is shown in \cite{zhao2014design,li2016connecting,zhao2016unified,mallada2017optimal,zhao2018distributed} that when the optimization problem \eqref{eqn:uc_olc} is feasible, under mild assumptions, the closed-loop equilibrium under UC is globally asymptotically stable and it is an optimal point of \eqref{eqn:uc_olc}. 
Such an optimal point is unique (up to a constant shift of $\theta$) if the cost functions $c_j(\cdot)$ are strictly convex. This means that, after a (cut or non-cut) failure, the post-contingency system is driven by UC to an optimal solution of \eqref{eqn:uc_olc} (under appropriate assumptions). We refer the readers to \cite{zhao2014design,li2016connecting,zhao2016unified,mallada2017optimal,zhao2018distributed} for specific controller designs and their analysis.

\subsection{Connecting UC and the Bridge-Block Decomposition}
We have introduced two partitions of a power network: the bridge-block decomposition $\calP^{\text{BB}}$ and the balancing area partition $\calP^{\text{UC}}$, which in general are different from each other. However, when they do coincide, the underlying power grid inherits analytical properties from both bridge-block decomposition and UC, making the system particularly robust against failures. Our proposed control strategy leverages precisely this feature, as we present in Section \ref{section:control_strategy}. 

In practice, the balancing areas over which UC operates are usually connected by multiple tie-lines in a mesh structure. However, in order to align with the bridge-block decomposition, we may have to switch off
a few tie-lines of $\calP^{\text{UC}}$. The selection of these tie-lines can be systematically optimized, e.g., to minimize line congestion or inter-area flows on the resulting network; see \cite{zocca2019spectral} for more details. We henceforth assume that $\calP^{\text{BB}}=\calP^{\text{UC}}$. 
We refer to such a network as the \emph{tree-partitioned} network since the balancing areas are connected in a tree structure prescribed by its bridge-block decomposition. 

\begin{defn} \label{def:critical}
	Given a cascading failure process described by $\calB(\n)$, with $\n\in\set{1,2, \ldots, N}$, the set $\calB(1)$ is said to be its \textbf{\emph{initial failure}}.
	An initial failure $\calB(1)$ is said to be \textbf{\emph{critical}} if the UC optimization \eqref{eqn:uc_olc} is infeasible over $\calG(1):=(\calN, \calE\bs\calB(1))$, or \textbf{\emph{non-critical}} otherwise.
\end{defn}

To formally state our localization result, we define the following concept to clarify the precise meaning of an area being ``local'' with respect to an initial failure.

\begin{defn}
	Given an initial failure $\calB(1)$, we say that a tree-partitioned balancing area $\calN_l$ is \textbf{\emph{associated}} with $\calB(1)$ if there exists an edge $e=(i,j)\in\calB(1)$ such that either $i\in\calN_l$ or $j\in\calN_l$.
\end{defn}

As we discuss below, our control strategy possesses a strong localization property for both non-critical and critical failures
in the sense that only the operation of the associated areas are adjusted whenever possible.

\section{Proposed Control Strategy: Summary}\label{section:control_strategy}

Our strategy consists of two phases: a planning phase in which tree-partitioned networks are created and an operation phase during which UC actively monitors and autonomously reacts to line failures during its operation.

\subsection{Planning Phase: Align Bridge-blocks and Balancing Areas}
Each balancing area of a multi-area power network is managed by an independent system operator (ISO). 
Although these areas exchange power with each other as prescribed by economic dispatch, their operations are relatively independent.
This is usually achieved via the zero area control error constraint in secondary frequency control \cite{bergen2009power}, which is enforced by UC with \eqref{eqn:uc_ace}. As mentioned in previous sections, such balancing areas typically do not form bridge-blocks, as redundant lines are believed to be critical in maintaining $N-1$ security of grid \cite{bergen2009power,bienstock2007integer,hines2007controlling}.

We propose to create bridge-blocks whose components coincide with the balancing areas over which UC operates. This can be done by switching off a small subset of the tie-lines so that areas are connected in a tree structure. The switching actions only need to be carried out in the planning phase, as line failures that occur during the operating phase do not affect the bridge-blocks already in place.\footnote{In fact, line failures can lead to a ``finer'' bridge-block decomposition, as more bridge blocks are potentially created when lines are removed from service.} 
It is interesting to notice that when the subset of lines to switch off is chosen carefully, the tree-partitioned network not only localizes the impact of line failures, but can also \emph{improves} overall reliability. This seemingly counter-intuitive phenomenon is illustrated by our case studies in Section \ref{section:case_study_n_1}.

We remark that it is natural to match bridge-blocks to balancing areas in transmission systems. First, there are typically small number of tie-lines between areas. Switching off those tie-lines is less likely to cause severe congestion to the grid. Second, each balancing area is supposed to balance its own power and the primary goal of failure mitigation is to prevent, as much as possible, a failure from impact other areas. This reconciles with the area control error (ACE) in traditional secondary frequency control.

In practice, the switching actions should be performed judiciously without jeopardizing the network operations. Any line switching action leads to power redistribution on remaining lines and may cause overload.  It is thus crucial to assess the impact of each set of switching actions and avoid creating any congested line. It should be noted that switching off lines may not necessarily increase the transmission cost~\cite{fisher2008optimal}. Moreover, the system congestion level may even decrease when carefully choosing tie-lines to switch off~\cite{zocca2019spectral}. We refer interested readers to~\cite{zocca2019spectral} for an efficient algorithm to optimize such a line selection.

\subsection{Operating Phase: Extending Unified Controller}
Once the tree-partitioned areas are formed, the power network operates under UC as a closed-loop system and responds to disturbances such as line failures or loss of generator/load in an autonomous manner. Unlike traditional secondary frequency control, UC is distributed and reacts continuously to any disturbances as part of normal operation. There is no explicit termination of UC primal-dual dynamics.
We assume that UC can detect line failures in real-time and thus react to the post-contingency network accordingly. In normal conditions where the system disturbances caused by line failures are small, UC always drives the power network back to an equilibrium point that can be interpreted as an optimal solution of \eqref{eqn:uc_olc}. This is the case, for instance, when non-critical failures (see Definition \ref{def:critical}) happen, and therefore such failures are always successfully mitigated.

However, in extreme scenarios where a major disturbance (e.g., a critical failure) happens, the optimization problem \eqref{eqn:uc_olc} can be infeasible. In other words, it is physically impossible for UC to achieve all of its control objectives after such a disturbance. 
This makes UC unstable (see Proposition \ref{prop:dual_to_infinity}) and may lead to successive failures.
There is therefore a need to extend the version of UC proposed in \cite{zhao2014design,li2016connecting,zhao2016unified,mallada2017optimal,zhao2018distributed} with two features: (a) a detection mechanism that monitors the system state and detects critical failures promptly; 
and (b) a constraint-lifting mechanism that responds to critical failures by proactively relaxing certain constraints of \eqref{eqn:uc_olc} to ensure system stability can be reached at minimal cost. 

Our technical results in Section \ref{section:critical_detection} suggest a way to implement both components as part of the normal operation of UC. System operators can prioritize different balancing areas by specifying the sequence of constraints to lift in response to extreme events. This allows the non-associated areas to be progressively involved and coordinated in a systematic fashion when mitigating critical failures. We discuss some potential schemes in Section \ref{section:critical_contraint_lift}.
\subsection{Guaranteed Mitigation and Localization}
We show in detail in Sections \ref{section:non_critical} and \ref{section:critical} that the proposed strategy provides strong guarantees in the mitigation and localization of both non-critical and critical failures. Specifically, it ensures that the cascading process is always stopped (a) after a non-critical failure by the associated areas, and the operating points of non-associated areas are not impacted in equilibrium or (b) after a critical failure when constraints in \eqref{eqn:uc_olc} are lifted in a progressive manner specified by the system operator. Thus the proposed strategy can always prevent successive failures, while localizing the impact of the initial failures as much as possible.

We remark that our strategy can maintain the $N-1$ security standard even though the balancing areas are connected in a tree structure. Unlike other classical approaches where failures that disconnect the network tend to incur more severe impact, our strategy can mitigate such failures as much as possible by autonomously adjusting the injections to rebalance and stabilize the system in each of the surviving component. In fact, our fast-timescale control can significantly improve the grid reliability in $N-k$ sense as we show in the numerical experiments in Section~\ref{section:case_studies}.

\section{Localizing Non-critical Failures} 
\label{section:non_critical}
In this section, we consider non-critical failures, as defined in Section \ref{section:block_controller}, and prove that such failures are always fully mitigated within the associated balancing areas.

We first characterize how the system operating point shifts in response to such failures. Recall that if an initial failure $\calB(1)$ is non-critical, the UC optimization \eqref{eqn:uc_olc} is feasible and thus the new operating point $x^*(1):=(\theta^*(1), \omega^*(1), d^*(1), f^*(1))$ satisfies all the constraints in \eqref{eqn:uc_olc}. In particular, none of the line limits is violated at $x^*(1)$ by \eqref{eqn:line_limit}, i.e.~$x^*(1)$ is a secure operating point and the cascade stops ($\calF(1) = \emptyset$). Moreover the power flows on bridges remain unchanged in equilibrium from their pre-contingency values, as the next result says. 

\begin{lemma}\label{lemma:zero_branch_deviation}
Given a non-critical initial failure $\calB(1)$, the new operating point $x^*(1)$ prescribed by the UC satisfies $f^*_e(1)=0$ for every bridge $e$ of the network.
\end{lemma}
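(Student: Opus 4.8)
The plan is to extract, from the zero area control error constraint \eqref{eqn:uc_ace}, a flow conservation law on the \emph{quotient graph} obtained by contracting each balancing area to a single node, and then exploit the fact that this quotient graph is a forest to force all bridge flow deviations to vanish.

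First I would unpack $ECf^*(1)=0$ at the level of a single area. Since $\calB(1)$ is non-critical, the UC problem \eqref{eqn:uc_olc} over $\calG(1)$ is feasible and $x^*(1)$ satisfies all its constraints, in particular \eqref{eqn:uc_ace}. Fixing a balancing area $\calN_l$, I would compute the $l$-th component $(ECf^*(1))_l=\sum_{j\in\calN_l}\sum_{e}C_{je}f^*_e(1)$. For any line internal to $\calN_l$ both endpoints lie in $\calN_l$, so its incidence entries $+1$ and $-1$ cancel and it contributes nothing; only tie-lines incident to $\calN_l$ survive, each contributing $\pm f^*_e(1)$. Because $\calP^{\text{BB}}=\calP^{\text{UC}}$, these tie-lines are precisely the bridges joining $\calN_l$ to its neighbouring areas, so \eqref{eqn:uc_ace} states that the signed sum of bridge flow deviations incident to every area equals zero.

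Next I would phrase this as $\bar C\,g=0$, where $g$ collects the deviations $f^*_e(1)$ on the surviving bridges of $\calG(1)$ and $\bar C$ is the oriented incidence matrix of the quotient graph $T$ whose vertices are the areas $\calN_1,\dots,\calN_k$ and whose edges are those bridges. By the defining property of the bridge-block decomposition, contracting each block yields a tree; removing the failed bridges in $\calB(1)$ only deletes edges, so $T$ is a forest. The incidence matrix of a forest has full column rank — each connected component on $p$ vertices contributes $p-1$ edges and rank $p-1$ — hence $\ker\bar C=\{0\}$, giving $g=0$, i.e. $f^*_e(1)=0$ for every surviving bridge $e$, as claimed. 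Equivalently, one may argue by leaf-stripping: a leaf area of $T$ has a unique incident bridge whose flow is forced to zero by that area's constraint, and deleting it and iterating drains the whole forest.

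I expect the only genuine obstacle to be careful bookkeeping rather than any conceptual depth. One must check that the internal-line contributions cancel with the correct signs, and that no two distinct areas are joined by more than one bridge — guaranteed, since a second connecting edge would place the two areas in a common block and contradict $\calP^{\text{BB}}=\calP^{\text{UC}}$ — so that $T$ is a simple forest and the full-column-rank count holds exactly as stated. Once the reduction to $\ker\bar C=\{0\}$ is in place, the conclusion is immediate.
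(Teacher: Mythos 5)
Your proof is correct and rests on the same two facts as the paper's: the zero area control error constraint \eqref{eqn:uc_ace} reduces, after cancellation of internal lines, to a flow-conservation law on the quotient forest of balancing areas, which then forces all bridge deviations to vanish. The paper packages this slightly differently --- for each bridge $e$ it sums the constraints over the areas on one side of $e$ (i.e., applies the cut vector $\bff{1}_{\calP'}^T$ to $ECf^*=0$) to isolate $f^*_e$ directly, which is exactly the per-bridge instance of your full-column-rank/leaf-stripping argument --- so the two proofs are essentially the same.
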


\begin{proof}
To simplify the notation, we drop the stage index $(1)$ from $x^*$ and denote $x^*=(\theta^*,\omega^*,d^*,f^*)$. Given a bridge $e=(j_1,j_2)$ of $\calG$, removing $e$ partitions $\calG$ into two connected components, say $\calC_1$ and $\calC_2$. Without loss of generality, assume $j_1\in\calC_1$ and $j_2\in\calC_2$. For an area $\calN_v$ from the partition $\calP$, we say $\calN_v$ is within $\calC_1$ if for any $j\in\calN_v$ we have $j\in\calC_1$. It is easy to check from the definition of a tree-partitioned network that any area $\calN_v$ from $\calP$ is either within $\calC_1$ or within $\calC_2$, and that $e$ is the only edge in $\calG$ that has one endpoint in $\calC_1$ and the other endpoint in $\calC_2$.

Let $\calP'$ be the subset of areas within $\calC_1$ from $\calP$, and let $\bff{1}_{\calP'}\in\set{0,1}^{\abs{\calP}}$ be its characteristic vector (that is, the $l$-th component of $\bff{1}_{\calP'}$ is $1$ if $\calN_l\in\calP'$ and $0$ otherwise). Given two buses $i$ and $j$, we denote $i\goesto j$ if $(i,j)\in\calE$ and $j\goesto i$ if $(j,i)\in\calE$.

Note that \eqref{eqn:uc_ace} ensures the injections are balanced for all areas. We can thus sum over all areas within $\calC_1$:
\begin{IEEEeqnarray}{rCl}
0&=&\bff{1}_{\calP'}^TECf^*\nonumber
\end{IEEEeqnarray}
Following the definition of matrices $E, C$ and the above notations, we can rewrite the above equations as:
\begin{IEEEeqnarray}{rCl}
0 &=&\sum_{l:\calN_l\in\calP'}\sum_{i\in\calN_l}\big (\sum_{j:j\goesto i}f^*_{ji} - \sum_{j:i\goesto j} f^*_{ij}  \big)\nonumber\\
&=&\sum_{i:i\in\calC_1}\big (\sum_{j:j\goesto i}f^*_{ji} - \sum_{j:i\goesto j} f^*_{ij}\big )
\label{eqn:uc_sum_injection}
\end{IEEEeqnarray}
Mathematically, $\big (\sum_{j:j\goesto i}f^*_{ji} - \sum_{j:i\goesto j} f^*_{ij}\big ) +r_i+d_i^*=0$ is the flow conservation for node $i$. Therefore, the right hand side of equation~\eqref{eqn:uc_sum_injection} calculates the summation of injections over all nodes within the connected component $\calC_1$, which should be zero enforced by \eqref{eqn:uc_ace}.

Now, consider the node $j_1$ and the bridge $e=(j_1,j_2)$. For all other transmission lines $(i,j)$ within component $\calC_1$, the flow will be counted twice (for nodes $i$ and $j$ respectively) with opposite direction in~\eqref{eqn:uc_sum_injection} and thus gets cancelled out. Only the flow of bridge $e$ is counted once for node $j_1$. Therefore, we conclude that the flow on the bridge must be zero, i.e. $f_e^*=0$. Since the bridge $e$ is arbitrary, we have thus proved the desired result.

\end{proof}
This lemma shows that tree-partitioned areas enable UC to achieve more than what it was originally designed for in \cite{zhao2014design,li2016connecting,zhao2016unified,mallada2017optimal,zhao2018distributed}: the extended UC not only enforces zero area control errors through \eqref{eqn:uc_ace}, it also guarantees zero flow deviations on all bridges. 

The following proposition is another result of this type, which clarifies how the tree-partitioned network induces a localization property under UC. 

\begin{prop}\label{prop:localizability}
Assume $c_j(\cdot)$ is strictly convex and achieves its minimum at $d_j = 0$ for all $j\in\calN$. Given a non-critical initial failure $\calB(1)$, if an area $\calN_l$ is not associated with $\calB(1)$, then at equilibrium $x^*(1)$ we have $d^*_j(1)=0$ for all $j\in\calN_l$.
\end{prop}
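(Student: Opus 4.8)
The plan is to exhibit an alternative feasible point $\hat{x}$ of \eqref{eqn:uc_olc} that agrees with the optimal $x^*(1)$ outside $\calN_l$ but sets $\hat d_j=0$ for every $j\in\calN_l$, and then to play this against the optimality of $x^*(1)$ together with the strict convexity of the $c_j$'s.

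First I would record the two structural facts that make the construction possible. Since $\calN_l$ is not associated with $\calB(1)$, no outaged line has an endpoint in $\calN_l$; as the post-contingency disruption $r$ arising from the outage of the lines in $\calB(1)$ is supported on their endpoints, this gives $r_j=0$ for all $j\in\calN_l$ and leaves the internal subgraph of $\calN_l$ intact (hence connected, as $\calN_l$ is a bridge-block). Second, summing the balance constraint \eqref{eqn:uc_balance} over any area $\calN_{l'}$ and using the area-control-error constraint \eqref{eqn:uc_ace} shows that each area carries zero net injection, $\sum_{j\in\calN_{l'}}(r_j+d^*_j(1))=(ECf^*(1))_{l'}=0$; combined with Lemma~\ref{lemma:zero_branch_deviation}, every bridge flow $f^*_e(1)$ vanishes.

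Next I would build $\hat x$. Set $\hat d_j=0$ on $\calN_l$ and $\hat d_j=d^*_j(1)$ elsewhere; set $\hat f_e=0$ on the internal lines of $\calN_l$ and $\hat f_e=f^*_e(1)$ on every other line, so that all bridges keep their zero flow. Balance \eqref{eqn:uc_balance} then holds at each bus: on $\calN_l$ because $r_j+\hat d_j=0$ and every incident flow is zero, and elsewhere because neither the flows nor the injections have changed; and \eqref{eqn:uc_ace} holds since all bridge flows vanish. It remains to realize these flows through \eqref{eqn:dcflow}. I take $\hat\theta$ constant on $\calN_l$ and, on each subtree hanging off $\calN_l$ in the tree of balancing areas, I shift $\theta^*(1)$ by the single constant that matches the zero-flow bridge joining that subtree to $\calN_l$. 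A uniform shift of a subtree leaves all its internal and bridge flows equal to those of $x^*(1)$, so \eqref{eqn:dcflow} is preserved there; and because the areas form a \emph{tree}, removing $\calN_l$ separates the other areas into disjoint subtrees, so the required shifts never collide. Finally $\hat f$ and $\hat d$ take only the values $f^*_e(1),\,d^*_j(1)$ or $0$, and the zero-deviation point lies within the limits \eqref{eqn:line_limit}--\eqref{eqn:control_limit}; hence $\hat x$ is feasible.

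To conclude I would compare costs. By construction $\sum_j c_j(\hat d_j)=\sum_{j\notin\calN_l}c_j(d^*_j(1))+\sum_{j\in\calN_l}c_j(0)\le\sum_j c_j(d^*_j(1))$, using $c_j(0)\le c_j(d^*_j(1))$; optimality of $x^*(1)$ forces equality, whence $c_j(0)=c_j(d^*_j(1))$ for each $j\in\calN_l$, and strict convexity with minimum at $0$ gives $d^*_j(1)=0$. I expect the main obstacle to be exactly the angle construction above: naively freezing $\theta^*(1)$ outside $\calN_l$ is incompatible with driving the internal flows of $\calN_l$ to zero whenever the port buses of $\calN_l$ sit at different angles, and it is precisely the acyclicity of the area tree that lets one absorb this mismatch into independent per-subtree shifts.
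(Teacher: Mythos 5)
Your proof is correct and follows the same overall skeleton as the paper's: construct a competitor feasible point that zeroes out $d$ on $\calN_l$ and the internal flows of $\calN_l$ while freezing everything else, then play it against the optimality of $x^*(1)$ and the strict convexity of the $c_j$'s. Where you genuinely diverge is in verifying the DC power flow constraint \eqref{eqn:dcflow} for the competitor. The paper defines $\tilde{\theta}^*=L^\dag \tilde{p}^*$ implicitly and then proves, via a separate lemma on Laplacian systems $Lx=b$ with $b$ vanishing outside one area and summing to zero within each area (Lemma~\ref{lemma:laplace_solvability}, together with Lemma~\ref{lemma:uc_injection_vanish}), that $\theta^*-\tilde{\theta}^*$ is constant on the closed neighborhood of each untouched area, so the induced flows agree with the prescribed ones. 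You instead build $\hat\theta$ explicitly: constant on $\calN_l$, and equal to $\theta^*(1)$ plus one constant per subtree of the area tree hanging off $\calN_l$, the constants being pinned down by the zero-flow bridges guaranteed by Lemma~\ref{lemma:zero_branch_deviation}. Your route is more elementary --- it avoids the pseudo-inverse and the Laplacian lemma entirely --- and it isolates exactly where the tree structure is used: each subtree attaches to $\calN_l$ through a single bridge, so the per-subtree shifts are well defined and never conflict, which is precisely what fails if the areas were connected in a mesh. Your final step also differs slightly: you deduce $d^*_j(1)=0$ from termwise equality of costs plus strict convexity of each $c_j$ with unique minimizer at $0$, whereas the paper invokes uniqueness of the optimizer of \eqref{eqn:uc_olc}; both are valid. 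Like the paper, you implicitly assume that the zero-deviation point respects the box constraints \eqref{eqn:line_limit}--\eqref{eqn:control_limit} and that the disturbance $r$ vanishes on non-associated areas; these are the same modeling conventions the paper uses, not gaps.
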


\begin{proof}
To simplify the notation, we drop the stage index from the equilibrium $x^*$ and write $x^*=(\theta^*,\omega^*, d^*,f^*)$ and $p^* = r+d^*$.

First, we construct a different point $\tilde{x}^*=(\tilde{\theta}^*, \tilde\omega^*, \tilde{d}^*, \tilde{f}^*)$ by changing certain entries of $x^*$ within a non-associated area $\calN_l$ as follows: 
(a) replace $d^*_j$ with $\tilde{d}_j^*=0$ for all $j\in\calN_l$; (b) replace $f^*_e$ with $\tilde{f}^*_e=0$ for $e\in\calE$ that have both endpoints in $\calN_l$; and (c) replace $\theta^*$ by a solution 
$\tilde{\theta}^* = L^\dag \tilde p^*$ obtained from solving the DC power flow equations with injections $\tilde{p}^* = r+\tilde{d}^*$.
All other entries of $x^*$ remain unchanged in $\tilde x^*$.
Since $c_j(\cdot)$ attains its minimum at $d_j = 0$, $\tilde{x}^*$ achieves at most the same objective value \eqref{eqn:uc_obj} as $x^*$. Thus $\tilde{x}^*$ must be an optimal point of \eqref{eqn:uc_olc}, provided it is feasible.

When the cost functions $c_j(\cdot)$ are strictly convex, the optimal solution to  \eqref{eqn:uc_olc} is unique in $d^*$ and $f^*$ ($\theta^*$ is also unique up to an arbitrary reference angle). As a result, if the constructed point $\tilde{x}^*$ is feasible, We can then conclude that $\tilde{x}^*=x^*$ (up to an arbitrary reference angle). 

We now prove the feasibility of $\tilde{x}^*$. The construction of $\tilde{x}^*$ ensures that \eqref{eqn:uc_ace}\eqref{eqn:line_limit}\eqref{eqn:control_limit} are satisfied. 
If we can show that $\tilde{f}^*=BC^T\tilde{\theta}^*$, then since $\tilde{\theta}^*$ is obtained by solving the DC power flow equations from $CBC^T\tilde{\theta}^*=\tilde{p}^*$, constraints \eqref{eqn:uc_balance} and \eqref{eqn:dcflow} are also satisfied, proving the feasibility of 
$\tilde x^*$.
It thus suffices to show $\tilde{f}^*=BC^T\tilde{\theta}^*$. To do so, we first establish the following lemma:
\begin{lemma}\label{lemma:uc_injection_vanish}
For any tree-partitioned area $\calN_z$ in $\calP$, we have
$
\sum_{j\in\calN_z} {p^*_j}=\sum_{j\in\calN_z} {\tilde{p}^*_j}=0
$.
\end{lemma}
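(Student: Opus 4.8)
The plan is to prove the two equalities separately, as they rest on different facts. The first equality, $\sum_{j\in\calN_z} p^*_j = 0$, I would read off directly from the equilibrium constraints. Summing the power-balance constraint \eqref{eqn:uc_balance} over the buses of a fixed area $\calN_z$ gives $\sum_{j\in\calN_z} p^*_j = \sum_{j\in\calN_z}(Cf^*)_j$. On the right-hand side the contributions of every internal line of $\calN_z$ cancel, since its two endpoints carry opposite signs in $C$, so only the tie-lines of $\calN_z$ survive; the resulting quantity is exactly the $z$-th entry $(ECf^*)_z$, because $E_{zj}=1$ iff $j\in\calN_z$. The zero area control error constraint \eqref{eqn:uc_ace} then forces $(ECf^*)_z=0$, which is the claim. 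I would stress that this step uses only feasibility of $x^*$, and in particular no strict convexity.

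For the second equality I would split on whether $\calN_z$ is the single area $\calN_l$ that was modified in the construction of $\tilde x^*$. If $\calN_z\neq\calN_l$, the construction leaves $d^*_j$ unchanged for $j\in\calN_z$, hence $\tilde p^*_j = r_j+\tilde d^*_j = r_j+d^*_j = p^*_j$ on $\calN_z$, and $\sum_{j\in\calN_z}\tilde p^*_j = \sum_{j\in\calN_z}p^*_j = 0$ by the first part. If $\calN_z=\calN_l$, then step (a) of the construction sets $\tilde d^*_j=0$ for all $j\in\calN_l$, so $\tilde p^*_j = r_j$ there and the claim reduces to $\sum_{j\in\calN_l} r_j = 0$.

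The one nontrivial input is therefore the structure of the disturbance $r$, and this is where I expect the only real content of the argument to lie. In the integrated failure model of Section~\ref{section:failure_model} the sole disturbance considered is line removal, so $r$ is the net power imbalance created by the outaged lines, $r_j = \sum_{e\in\calB(1)} C_{je}\,\hat f_e$ with $\hat f$ the pre-contingency branch flows; the qualitative fact I actually need is only that $r$ is supported on buses incident to $\calB(1)$. Since $\calN_l$ is \emph{not} associated with $\calB(1)$, no outaged edge has an endpoint in $\calN_l$, so $C_{je}=0$ for every $e\in\calB(1)$ and every $j\in\calN_l$; hence $r_j=0$ termwise on $\calN_l$ and a fortiori $\sum_{j\in\calN_l} r_j=0$. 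I would highlight the asymmetry that makes the summed statement the natural one: for $p^*$ only the area-wise \emph{sum} vanishes (through \eqref{eqn:uc_ace}), whereas on the modified area each $\tilde p^*_j$ vanishes individually, and phrasing the lemma as a sum lets both cases be handled uniformly. The main obstacle is thus simply to pin down the support of $r$ precisely enough to invoke non-association; once that is in place, both equalities are immediate.
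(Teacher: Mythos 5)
Your proof is correct and follows essentially the same route as the paper's: summing \eqref{eqn:uc_balance} over $\calN_z$ and invoking the zero area control error constraint \eqref{eqn:uc_ace} to get $\sum_{j\in\calN_z}p^*_j=(ECf^*)_z=0$, then splitting on whether $\calN_z=\calN_l$ for the second equality. The only difference is that you explicitly justify why $\tilde{p}^*_j$ vanishes on $\calN_l$ --- namely that the disturbance $r$ is supported on buses incident to $\calB(1)$ and $\calN_l$ is non-associated --- a step the paper compresses into ``by construction''; this is a useful clarification rather than a deviation.
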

\begin{proof}
Let $\bff{1}_{\calN_z}\in\R^{\abs{\calN}}$ be the characteristic vector of $\calN_z$, that is, the $j$-th component of $\bff{1}_{\calN_z}$ is $1$ if $j\in\calN_z$ and $0$ otherwise. Summing \eqref{eqn:uc_balance} over $j\in\calN_z$, we have:
$
\sum_{j\in\calN_z}{p^*_j}=\bff{1}_{\calN_z}^TCf=(ECf)_z=0,
$
where $(ECf)_z$ is the $z$-th row of $ECf$. 
If $\calN_z = \calN_l$, we have $\tilde{p}^*_j=0$ for $j\in\calN_l$ by construction and hence $\sum_{j\in\calN_z} {\tilde{p}^*_j}=0$. For $\calN_z\neq \calN_l$, we have $\tilde{p}^*_j=p^*_j$ for any $j\in\calN_z$ by construction. Thus, $\sum_{j\in\calN_z} {\tilde{p}^*_j}=0$, completing the proof.
\end{proof}

Consider now an area $\calN_w$ that is different from $\calN_l$. 
In this case, we do not change the injections from $x^*$ when constructing $\tilde{x}^*$, thus $p_j^*-\tilde{p}_j^*=0$ 
for all $j\in\calN_w$. From Lemma \ref{lemma:uc_injection_vanish}, we see that $\sum_{j\in\calN_z}\paren{p_j^*-\tilde{p}_j^*}=0$ for all $z$. 
Since both $(p^*, \theta^*)$ and $(\tilde p^*, \tilde \theta^*)$ satisfy the DC power flow equations, we have
$
CBC^T (\theta^* - \tilde{\theta}^*) \ = \ p^* - \tilde{p}^*.
$
\begin{lemma}\label{lemma:laplace_solvability}
Let $\calP=\set{\calN_1,\calN_2,\cdots, \calN_k}$ be the tree-partitioned areas of $\calG$ and consider a vector $p\in\R^{n}$ such that $p_j=0$ for all $j\in\calN_1$ and $\sum_{j\in\calN_z} p_j=0$ for $z\neq 1$. Then the Laplacian equation $CBC^T \theta = p$
is solvable, and any solution $\theta$ satisfies $\theta_i=\theta_j$ for all $i,j\in\ol{\calN}_1$, where $\ol{\calN}_1:=\set{j:\exists \, i\in\calN_1 \text{ s.t. } (i,j)\in\calE \text{ or } (j,i) \in \calE}$.
\end{lemma}

\begin{proof}
It is well-known (see \cite{guo2017monotonicity} for instance) that the Laplacian matrix $L:=CBC^T$ of a connected graph $\calG=(\calN,\calE)$ has rank $\abs{\calN} - 1$, and $L\theta=p$ is solvable if and only if $\bff{1}^T p=0$, where $\bff{1}$ is the vector with proper dimension that consists of ones. Moreover, the kernel of $L$ is given by $\spann(\bff{1})$.

If $\calN_1$ is the only area in $\calP$, then $p=0$ since $p_j=0$ for all $j\in\calN_1$. We thus know the solution space to $L\theta=p$ is exactly the kernel of $L$, and the desired result holds.

If $\calN_1$ is not the only area in $\calP$, we can show that $f_e=0$ for every bridge $e$ connecting different areas. To see this, we start from a leaf area $\calN_z$ of the tree-partitioned network. Since $\sum_{j\in\calN_z} p_j=0$, we have that the only bridge connecting the leaf area will carry zero flow. Therefore, the area $\calN_z$ is decoupled from other areas and the Laplacian equation can be decomposed. We can iteratively use the above statement and decompose the Laplacian equation with respect to each area. With previous result, we know $\theta_i=\theta_j$ for all $i,j\in\calN_1$ since $p_j=0$ for all $j\in\calN_1$. Together with the fact that bridges carry zero flow, we have $\theta_i=\theta_j$ for all $i,j\in \ol{\calN}_1$. 
\end{proof}

By Lemma~\ref{lemma:laplace_solvability}, we then have $\theta^*_j-\tilde{\theta}^*_j$ is a constant over $\ol{\calN}_l$, and thus $\tilde{\theta}^*_i - \tilde{\theta}^*_j \ = \ \theta^*_i -\theta^*_j $
for all $i, j\in\ol{\calN}_l$. This in particular implies
$
\tilde{f}^*_e \ = \ f^*_e \  = \ B_e(\theta^*_i -\theta^*_j) \ = \  B_e(\tilde{\theta}^*_i - \tilde{\theta}^*_j)
$
for all $e=(i,j)$ such that $i\in\calN_w$ or $j\in\calN_w$.

Finally, consider the area $\calN_l$. We have $\tilde{p}^*_j=0$ by construction. 
From Lemma \ref{lemma:uc_injection_vanish} we have $\sum_{j\in\calN_z} \tilde{p}_j^* =0$ for all $z$. Since $CBC^T\tilde{\theta}^*=\tilde{p}^*$, we know $\tilde{\theta}^*_i=\tilde{\theta}^*_j$ for all $i,j\in\ol{\calN}_l$. This implies that for any edge $e=(i,j)$ within $\calN_l$, we have
$
\tilde{f}^*_e=0=B_e(\tilde{\theta}^*_i-\tilde{\theta}^*_j).
$
and therefore  $\tilde{f}^*_e=B_e(\tilde{\theta}^*_i-\tilde{\theta}^*_j)$ for all $e\in\calE$, concluding the proof.
\end{proof}

Proposition \ref{prop:localizability} reveals that with the proposed control strategy, after a non-critical failure, the injections and power flows in non-associated areas remain unchanged at equilibrium, even though they fluctuate during transient according to \eqref{eqn:swing_and_network_dynamics}.
Our control scheme guarantees that non-critical failures in a balancing area do not impact the operations of other areas, achieving stronger balancing area independence than that ensured by zero area control error alone.

Furthermore, traditional control strategies usually treat  failures that disconnects the system differently, i.e. the post-contingency injections normally stay constant, but can be changed if the system are disconnected due to the line failures~\cite{soltan2015analysis}. Under our scheme, failures that disconnect the system are treated in exactly the same way as failures that do not, provided that they are non-critical. Moreover, the impact of a failure that disconnects the system is localized and properly mitigated to the associated areas as well. This is in stark contrast with the global and severe impact of a bridge failure~\cite{part2} and is the key benefit of integrating UC with the bridge-block decomposition.


\section{Controlling Critical Failures}\label{section:critical}
We now consider the case where the initial failure is critical. This may happen when a major generator or transmission line is disconnected from the grid.

\subsection{Unified Controller under Critical Failures}\label{section:critical_detection}
Since UC is a concept that has emerged from the frequency regulation literature, 
the underlying optimization \eqref{eqn:uc_olc} is always assumed to be feasible in existing studies \cite{zhao2014design,li2016connecting,zhao2016unified,mallada2017optimal,zhao2018distributed}. 
As such, little is known about the behavior of UC if this assumption is violated when a critical failure happens. 
We now characterize the limiting behavior of UC in this setting.

In order to do this, we first formulate the exact controller dynamics of UC. Unfortunately, there is no standard way to do this as multiple designs of UC have been proposed in the literature \cite{zhao2014design,li2016connecting,zhao2016unified,mallada2017optimal,zhao2018distributed}, each with its own strengths and weaknesses. Nevertheless, all of the proposed controller designs are (approximately) projected primal-dual algorithms for the optimization problem \eqref{eqn:uc_olc} satisfying two assumptions that we now state. Let $\lambda_i$, for $i\in\set{1,2,\cdots, n + 3m+\abs{\calP^{\text{UC}}}}$, be the dual variables corresponding to the constraints \eqref{eqn:uc_balance}-\eqref{eqn:line_limit}.

\textbf{UC1}: For all $j\in\calN$, $\ul{d}_j\le d_j(t)\le \ol{d}_j$ is satisfied for all $t$. This is achieved either via a projection operator that maps $d_j(t)$ to this interval or by requiring the cost function $c_j(\cdot)$ to approach infinity near these boundaries.

\textbf{UC2}: The primal variables $f, \theta$ and the dual variables $\lambda_i$ are updated by a primal-dual algorithm\footnote{We do not consider the specific variants of primal-dual algorithms that are proposed in different designs of UC, since the standard primal-dual algorithm is often a good approximation.} to solve \eqref{eqn:uc_olc}.

\begin{prop} \label{prop:dual_to_infinity}
Assume UC1 and UC2 hold. If \eqref{eqn:uc_olc} is infeasible, then there exists a dual variable $\lambda_i$ such that 
$\limsup_{t\goesto\infty}\abs{\lambda_i(t)} = \infty$.
\end{prop}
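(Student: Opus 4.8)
The plan is to argue by contradiction: I would assume that \emph{every} dual variable stays bounded, and then extract from the closed-loop trajectory a single point that is \emph{exactly} feasible for \eqref{eqn:uc_olc}, contradicting infeasibility. The engine of the construction is a time-averaging argument applied to the dual update laws guaranteed by UC2. The conceptual reason such an argument should work is that an equilibrium of a primal-dual flow is a saddle point of the Lagrangian, hence a KKT point, hence primal-feasible; infeasibility therefore forbids a finite equilibrium, and the averaging refines this into the quantitative blow-up claimed.

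First I would write down the abstract primal-dual structure. Let the objective be $\phi(d)=\sum_{j\in\calN}c_j(d_j)$ and collect constraints \eqref{eqn:uc_balance}--\eqref{eqn:line_limit} as affine equalities $h(\theta,d,f)=0$ (namely \eqref{eqn:uc_balance}, \eqref{eqn:dcflow}, \eqref{eqn:uc_ace}) and affine inequalities $g(\theta,d,f)\le 0$ (namely \eqref{eqn:line_limit}). Under UC2 the corresponding dual variables obey the standard laws $\dot\lambda^{h}=h$ for the equality block and a projected law $\dot\lambda^{g}=\Pi_{\ge 0}[\,\cdot\,]$ keeping $\lambda^{g}\ge 0$ for the inequality block, while UC1 keeps $d(t)\in[\ul d,\ol d]$ for all $t$. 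Note that \eqref{eqn:control_limit} is enforced by UC1 and does not appear among the $\lambda_i$, consistent with the stated index range $n+3m+\abs{\calP^{\text{UC}}}$.

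Second, assuming for contradiction that $\sup_t\abs{\lambda_i(t)}=:M<\infty$ for every $i$, I integrate $\dot\lambda^{h}=h$ to get $\left|\int_0^T h\,dt\right|\le 2M$, so the time-averaged equality residual $\tfrac1T\int_0^T h\,dt\to 0$; for the inequality block the projected dynamics satisfy $\dot\lambda^{g}\ge g$ pointwise (projection only raises the derivative, precisely when $\lambda^{g}=0$ and $g<0$), so $\tfrac1T\int_0^T g\,dt\le 2M/T\to 0$. Because $h$ and $g$ are affine, these averaged residuals equal $h$ and $g$ evaluated at the time-averaged primal point $(\bar\theta_T,\bar d_T,\bar f_T)=\tfrac1T\int_0^T(\theta,d,f)\,dt$. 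Third, I establish boundedness so a limit can be extracted: $\bar d_T\in[\ul d,\ol d]$ by convexity of the box and UC1; the averaged line limits give $\ul f-o(1)\le \bar f_T\le \ol f+o(1)$, so $\bar f_T$ is bounded; the averaged DC-flow equation then gives $BC^T\bar\theta_T=\bar f_T+o(1)$ bounded, so $\bar\theta_T$ is bounded modulo the kernel of $C^T$ (constant shifts per connected component), which I remove by fixing a reference bus in each component. Passing to a convergent subsequence $T_k\to\infty$ produces $(\theta^\infty,d^\infty,f^\infty)$ satisfying \eqref{eqn:uc_balance}, \eqref{eqn:dcflow}, \eqref{eqn:uc_ace}, \eqref{eqn:line_limit}, \eqref{eqn:control_limit} exactly, and with $\omega=0$ this is feasible for \eqref{eqn:uc_olc}, the desired contradiction.

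The main obstacle I expect lies in the inequality block: rigorously handling the projected dual dynamics and isolating the clean one-sided bound $\dot\lambda^{g}\ge g$, and then confirming that the \emph{averaged} primal point stays bounded from the constraint structure alone, even if $f(t)$ and $\theta(t)$ themselves diverge along the trajectory. A secondary technical point is the reference-angle indeterminacy in $\theta$, handled by fixing references per component. Finally, since UC2 only guarantees an approximate primal-dual form, I would remark that the averaging argument is robust to the specific variant as long as each dual coordinate is driven by its constraint residual, which is the only property it uses.
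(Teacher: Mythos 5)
Your proof is correct, but it takes a genuinely different route from the paper's. The paper argues \emph{directly}: infeasibility means the two convex sets $S_1=\{x : Ax\le g,\ Cx=h\}$ and $S_2=\{x: \ul{d}\le d\le \ol{d}\}$ are disjoint, so a separating hyperplane exists; Farkas' lemma applied to the augmented system then produces an explicit certificate $(w_1,w_2,w_3)$ with $w_1,w_3\ge 0$ such that the linear functional $z^T\lambda$, $z=[w_1;w_2]$, has derivative bounded below by a fixed $\eps>0$ along the trajectory (using exactly the same two one-sided facts you isolate: $[\cdot]^+_\lambda$ only raises the inequality-dual derivative, and UC1 keeps $x(t)\in S_2$). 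This yields $z^T\lambda(t)\ge z^T\lambda(0)+\eps t$, i.e.\ a \emph{quantitative linear divergence rate} along a known direction, which is what underpins the threshold-based detection mechanism of Section V-A. Your argument is the contrapositive via time-averaging: bounded duals force the averaged constraint residuals to vanish, affinity transfers this to the averaged primal point, the box and line-limit constraints bootstrap boundedness of $(\bar d_T,\bar f_T,\bar\theta_T)$ modulo $\ker C^T$, and a subsequential limit is exactly feasible. This is softer — it is non-constructive and gives no rate or direction of blow-up — but it avoids the separation theorem and Farkas' lemma entirely, and it makes transparent that the only structural inputs are $\dot\lambda \ge \text{residual}$ and UC1. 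The one step you should write out carefully is the passage from $\frac1T\int_0^T g\,dt \le 2M/T$ to $g(\bar x_T)\le o(1)$ componentwise, and the observation that $\theta$ enters the constraints only through $C^T\theta$, so quotienting by the per-component constant shifts is harmless; with those details filled in, the argument stands.
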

\begin{proof}
First, collect in the vector $x= (\theta, \omega, d, f) \in\R^{3n+m}$ 
all the decision variables of the UC optimization \eqref{eqn:uc_olc} and rewrite it in a more standard form as
\begin{equation}\label{eqn:pd_opt}
    \min_{\ul{d} \le d \le \ol{d}} \quad  c(d) \quad \text{s.t.} \quad Ax \le g, \  Cx = h
\end{equation}
where $A, C, g, h$ are matrices (vectors) of proper dimensions that can be recovered from the full formulation in \eqref{eqn:uc_olc}. Let $\lambda_1,\lambda_2$ be the corresponding dual variables, and set $\lambda:=[\lambda_1;\lambda_2]$ ($[\cdot;\cdot]$ here means matrix concatenation as a column). We can then write the Lagrangian for \eqref{eqn:pd_opt} as 
$$\calL(x, \lambda) = c(d)  + \lam_1^T (Ax-g) + \lam_2^T (Cx-h).$$
By the assumption UC2, we know that:
\begin{equation*}\label{eqn:pd_alg}
\dot{\lam}_1 = [Ax-g]_{\lam_1}^+, \quad \dot{\lam}_2 = Cx-h
\end{equation*}
with the projection operator $[\cdot]_{a}^+$ defined component-wise by $\left([x]_{a}^+\right)_i=x_i$ if $x_i>0$ or $(a)_i >0$, and $\left([x]_{a}^+\right)_i=0$ otherwise.
Consider two closed convex sets $S_1 = \{x~|~ Ax\le g, Cx = h\}$ and $S_2 = \{x~|~\ul{d} \le d \le \ol{d}\}$. If the optimization \eqref{eqn:uc_olc} is infeasible, then $S_1\cap S_2=\emptyset$. As a result, we can find a hyper-plane that separates $S_1$ and $S_2$: more specifically, there exists $q\in \R^{3n+m}, q_0 \in \R$ such that 
$$
q^Tx > q_0, \, \forall \, x \in S_1 \, \text{ and } \, q^Tx \le q_0, \, \forall \, x \in S_2.
$$
This fact then implies that the system 
$
\set{Ax\le g, \quad Cx=h, \quad q^Tx\le q_0}
$
is not solvable. By Farkas' Lemma, we can thus find vectors $w_1, w_2, w_3$ such that $w_1 \ge 0, w_3 \ge 0$ (the inequality is component-wise), $A^T w_1 + C^T w_2 + q w_3 = 0$, and $g^T w_1 + h^T w_2 +  q_0w_3 = - \eps< 0$.
Define $z = [w_1; w_2]$. We then see that under the UC controller, we have for any $t$:
\begin{subequations}\label{eqn:pd_infea}
	\begin{IEEEeqnarray}{rCl}
		z^T \dot{\lambda}(t)&= &w_1^T [Ax(t)-g]_\lam^+ + w_2^T (Cx(t)-h) \nonumber \\
		&\ge& w_1^T [Ax(t)-g]_\lam^+ + w_2^T (Cx(t)-h)\nonumber\\
		&& + \ w_3(q^T x(t) - q_0) \label{eqn:primal_in_range}\\
		&\ge & w_1^T (Ax(t)-g) + w_2^T (Cx(t)-h)\nonumber\\
		&& + \ w_3(q^T x(t) - q_0) \label{eqn:remove_bracket} \\
		&=&\paren{A^Tw_1  + C^Tw_2 + qw_3}x(t)\nonumber\\
		&& - \paren{w_1^T g + w_2^T h + w_3 q_0} = 0+\eps>0\nonumber
	\end{IEEEeqnarray}
\end{subequations}
where \eqref{eqn:primal_in_range} follows from $w_3\ge 0$ and  assumption UC1, which ensures $x(t)\in S_2$ and thus $q^Tx(t)-q_0 \le 0$, and \eqref{eqn:remove_bracket} comes from $w_1\ge 0$ and the fact that $[x]_\lambda^+\ge x$ for all $x$ (the inequality is component-wise). Consequently,$z^T \lambda(t) -z^T\lambda(0)> \eps t$
and thus
$
\lim_{t\goesto \infty} z^T\lambda(t) = \infty.
$
Finally, by noting
$
\lim_{t\goesto \infty} z^T\lambda(t) \le w_1^T\limsup_{t\goesto\infty}\abs{\lambda_1(t)} + \abs{w_2}^T\limsup_{t\goesto\infty}\abs{\lambda_2(t)},
$
the desired result follows. 
\end{proof}
Proposition \ref{prop:dual_to_infinity} implies that, after a critical failure, UC cannot drive the system to a proper and safe operating point. This type of instability suggests a way to detect critical failures. Specifically, since Proposition \ref{prop:dual_to_infinity} guarantees that at least one dual variable becomes arbitrarily large in UC operation when \eqref{eqn:uc_olc} is infeasible, we can set thresholds for the dual variables and raise an infeasibility warning if any of them is exceeded. By doing so, critical failures can be detected in a distributed fashion during the normal operations of UC. 

Note that the dual variables for non-critical failures are usually bounded in practice unless certain degeneracy in \eqref{eqn:uc_olc} happens. This, however, does not impact the stability of UC. That being said, non-critical failures may also cause relatively large dual variable values in transient states. The choice of detection thresholds thus inevitably involves trade-offs. Specifically, tighter thresholds allow critical failures to be detected more promptly so that the system only experiences a short period of instability.  On the other hand, tighter thresholds also lead to a larger false alarm rate for non-critical failures. As we will show in the next section, when a failure is detected being critical, certain protection mechanisms will be involved so that the system is stabilized with potential cost of non-local response and load loss. Therefore, these thresholds should be chosen carefully by the operator in accordance to specific system parameters and application scenarios.

\subsection{Constraint Lifting as a Remedy}\label{section:critical_contraint_lift}
In the event of a critical failure, it is impossible for UC to simultaneously achieve all of its control objectives and constraints. This can lead to instability and thus successive failures. We can mitigate this by progressively lifting certain constraints from UC in two different ways without compromising the basic objective of stabilizing the system:
\begin{itemize}
	\item The zero area control error constraints \eqref{eqn:uc_ace} between specific pairs of balancing areas can be lifted. This means the controller now involves more balancing areas in failure mitigation.
	\item Loads can be shedded, which is reflected  in \eqref{eqn:uc_olc} by enlarging the range $[\ul{d}_j, \ol{d}_j]$ for corresponding load buses.
\end{itemize}


By iteratively lifting these two types of constraints, we can guarantee the feasibility of \eqref{eqn:uc_olc} and ensure that the system converges to a stable point that is free from successive failures. This, however, comes with the cost of potential load loss, and thus must be carried out judiciously. 
The iterative relaxation procedure can follow predetermined rules specified by the system operator to prioritize different objectives. As an example, one can minimize load loss by relaxing possibly all area control error constraints before relaxing injection bounds on load buses.
This will utilize all the contingency and regulation reserves globally across all areas to meet demand and shed load only as a last resort. In contrast, if the localization of failure impact should be prioritized, the operator can choose to first lift load injection bounds in the associated areas and then progressively lift area control error constraints to get more balancing areas involved.

In practice, these two types of constraint lifting can be combined and implemented iteratively, following the predetermined rules from the system operator. Specifically, the system operator may assess the node importance of the grid and design a hierarchical constraint lifting procedure, which specifies the order for involving balancing areas and the allowance for injection adjustment. Once the procedure is specified, the real-time response of UC can be implemented accordingly by changing the dynamics of dual updates; see~\cite{mallada2017optimal} for the detailed design of UC dynamics.

\section{Case Studies}\label{section:case_studies}
 In this section, we evaluate the performance of the proposed control strategy on the IEEE 118-bus and IEEE 2736-bus (the Polish network) test systems, with respect to $N-k$ security standard and localization performance under different levels of system congestion.

\subsection{$N-k$ Security under Different Congestion Levels}\label{section:case_study_n_1}
We first focus on system robustness with respect to $N-k$ security standard on the IEEE 118-bus system. This test network has two balancing areas shown as Area 1 and Area 2 in Fig.~\ref{fig:IEEE_118}. To form a tree-partitioned network, three lines $(15,33)$, $(19,34)$, and $(23,24)$ 
are switched off, obtaining what we henceforth refer to as the \textit{revised} network.

\begin{figure}[t]
\centering
\includegraphics[width=.9\columnwidth]{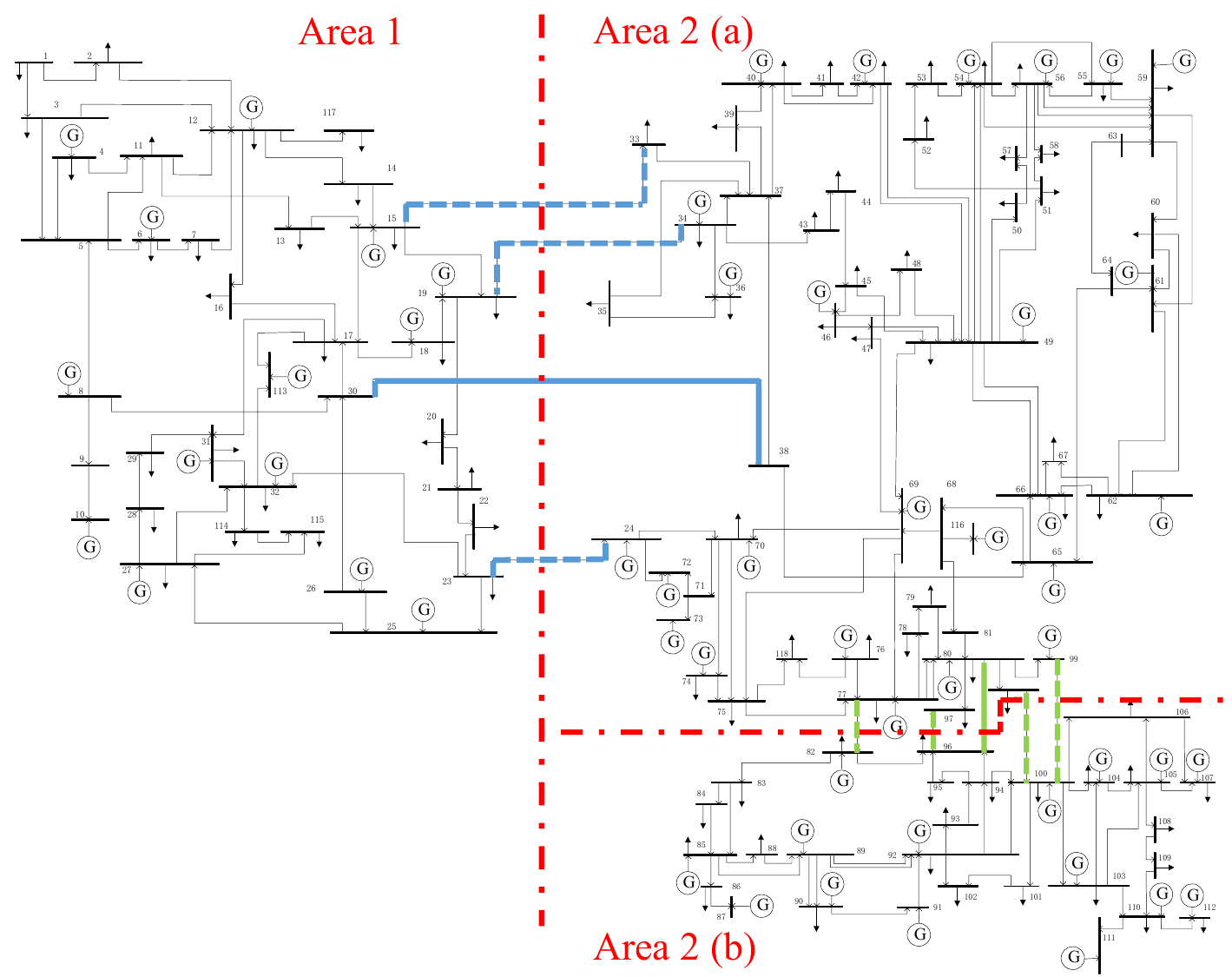}
\caption{One line diagram of the IEEE 118-bus test network.
	}\label{fig:IEEE_118}
\end{figure}

We compare UC on the tree-partitioned revised network, as specified by our proposal, and classical AGC on the \emph{original} network. 
UC is modeled by the optimization problem \eqref{eqn:uc_olc} 
and AGC is modeled by \eqref{eqn:uc_olc} without the line limits \eqref{eqn:line_limit}. 
A failure scenario is said to be \emph{vulnerable} if the initial failure leads to successive failures or loss of load. 
To compare the performance between our proposed approach and AGC, we collect statistics on (a) vulnerable scenarios as a percentage of the total simulated scenarios, and (b) load loss rate (LLR) which is defined as the ratio of the total load loss to the original total demand.  
Note that we do not perform time-domain simulations, but directly compute the equilibrium point of the closed-loop systems under UC and AGC by solving the coressponding optimization respectively.
This reduces the computational complexity and allows us to systematically evaluate our proposed control method in a wide range of simulation settings. Moreover, we assume that the detection threshold is set relatively large {so that the failures can be accurately classified into critical and non-critical failures. Therefore, constraint lifting only happens for critical failures. }

The failure scenarios are created as follows. First, we generate a variety of load injections 
by adding random perturbations to the nominal load profile from \cite{zimmerman2011matpower} and then solve the DC OPF to obtain the corresponding generator operating points {over the actual IEEE test networks. For a fair comparison, we use the same generation and load operating points over the revised network and calculate the pre-contingency flows with DC power flow equations.} 
Second, we sample over the collection of all subsets that consist of $k$ transmission lines of the IEEE 118-bus test network. Finally, for each sampled subset of $k$ lines, we remove all lines in this subset as initial failure and simulate the cascading process thus triggered. 
Specifically, for $k=1,2,3$ initial line failures, we generate 100, 15, and 15 load profiles and further compute the optimal generation dispatch by DC OPF. For each load profile, we iterate over every single transmission line failure, and sample 3,000 and 5,000 failure scenarios for $k=2,3$ line failures respectively. In total, our simulations cover the cases $k=1,2,3$ with roughly 138,600 failure scenarios, as summarized in Table.~\ref{table:sampling_rate}.

\begin{table}[h]
	\def\arraystretch{1.3}
	\centering{
		\caption{Simulation setup for $N-k$ security evaluation.}\label{table:sampling_rate}
		\begin{tabular}{|c|c|c|c|}
			\hline
			Case & $k=1$ & $k=2$ & $k=3$\\
			\hline
			\# of Load Profiles & 100 & 15 & 15 \\
			\hline
			\# of Sampled Failures & 186 & 3000 & 5000\\
			\hline
			Total Scenarios & 18600 & 45000 & 75000\\
			\hline
		\end{tabular}
	}
\end{table}

Fig.~\ref{fig:n_k}(a) shows the average, minimum, and maximum percentage of vulnerable scenarios across all sampled failure scenarios, while Fig.~\ref{fig:n_k}(b) plots the complementary cumulative distribution (CCDF) of the load loss rates. The simulation results show that the proposed control incurs both substantially fewer vulnerable scenarios and much less loss of load in all cases compared to AGC. This difference is particularly pronounced when multiple lines are tripped simultaneously ($k=2,3$). We highlight that in our simulations, UC operates over the tree-partitioned network (while AGC operates over the original network) in which some of the tie-lines are switched off and hence some transfer capacity is removed from the system. Moreover,
the newly created bridge $(30, 38)$ in the tree-partitioned network is never vulnerable under the proposed control in all the scenarios
we have studied.

\begin{figure}[t]
\centering
\subfloat[Average fraction of vulnerable scenarios for $k=1,2,3$. \label{fig:vulnerable_lines_wrt_k}] 
{\includegraphics[width=0.35\textwidth]{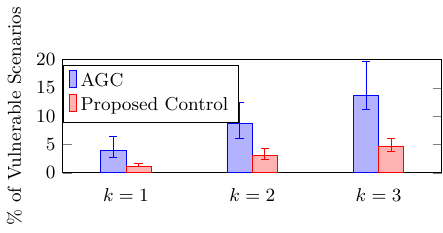}}
\hfill
\subfloat[CCDF for load loss rate for $k=1,2,3$. \label{fig:load_loss_nk}] {\includegraphics[width=0.35\textwidth]{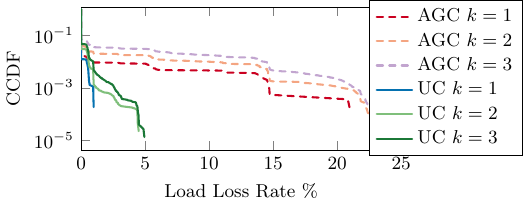}}
\caption{System robustness in terms of the $N-k$ security standard.}
\label{fig:n_k}
\end{figure}
\begin{figure}[t]
\centering
\subfloat[Average fraction of vulnerable scenarios for different congestion levels. \label{fig:vulnerable_lines}] {\includegraphics[width=0.35\textwidth]{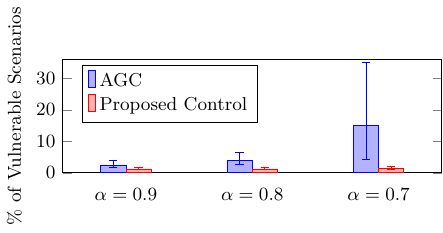}}
\hfill
\subfloat[CCDF for load loss rate for different congestion levels. \label{fig:load_loss_n1}] {\includegraphics[width=0.35\textwidth]{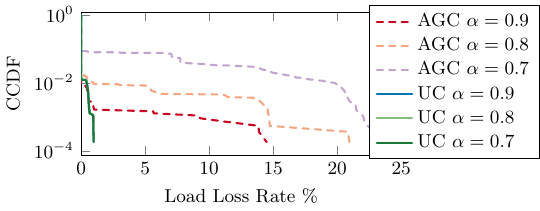}}
\caption{System robustness under different levels of congestion obtained scaling line capacities by the factor $\alpha$.}
\label{fig:n_congestion}
\end{figure}
\begin{figure}[ht]
\centering
\subfloat[CCDF for generator response on IEEE 118-bus network. \label{fig:gen_response_118}] {\includegraphics[width=0.35\textwidth]{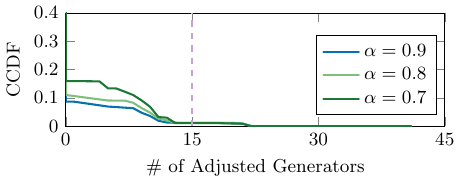}}
\hfill
\subfloat[CCDF for generator response on IEEE Polish network. \label{fig:gen_response_2736}] {\includegraphics[width=0.35\textwidth]{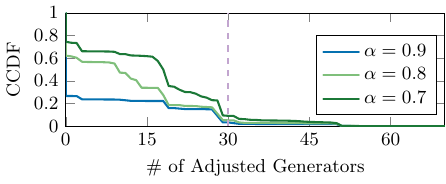}}
\caption{Generator response over IEEE 118-bus and Polish network.}
\label{fig:n_gen}
\end{figure}

We then illustrate the improvement of the proposed approach over AGC under different congestion levels. To do so, we scale down the line capacities to $\alpha=0.9, 0.8, 0.7$ of the base values and collect statistics for all single line initial failures ($k=1$). Our results are summarized in Fig.~\ref{fig:n_congestion}, which again show that the proposed approach significantly outperforms AGC in all scenarios, especially those in which the system is congested. Again in all these scenarios, the bridge  is never vulnerable under the proposed control.

{We remark that UC outperforms AGC even if both control methods operate over the tree-partitioned network since AGC does not enforce line limits. There are only a few failure scenarios with a lower load loss rate under AGC. However, those failures always propagate through multiple stages and require a significantly longer time to re-stabilize the system.}


\subsection{Localized Failure Mitigation}\label{section:localized_exp}

In this subsection, we consider a specific constraint lifting rule that progressively involves other areas by relaxing area control error constraints only if local load shedding within the associated areas is not enough to make problem  \eqref{eqn:uc_olc} feasible. This rule prioritizes localization of the initial failure. By implementing it, we show that the proposed control strategy can localize cascading failures within the associated areas with negligible load loss. The experiments are carried out over two networks: (a) a finer tree-partitioned version of the IEEE 118-bus test network, and (b) the much larger-scale Polish network, consisting of 2736 buses and 3504 lines.

For the IEEE 118-bus test network, we switch off 4 additional lines, $(77,82)$, $(96,97)$, $(98,100)$, and $(99,100)$, which refines the bridge-blocks used in the previous subsection since it further decomposes Area 2 into two balancing areas (as shown in Fig.~\ref{fig:IEEE_118}). The generator capacities are scaled down by 60\% so that the total generation reserve is roughly 20\%. We create different congestion levels by scaling the line capacities according to a factor $\alpha=0.9,0.8,0.7$ and iterate over all single transmission lines as initial failures. The injections are the same as that for the $N-1$ test in the previous subsection.

The statistics on the fraction of vulnerable scenarios and load loss rates (LLRs) for this experiment are summarized in Table~\ref{table:case118}. We observe that the proposed control strategy never incurs more than 2.21\% LLR across all tested injections and congestion levels. Furthermore, for this specific network, the proposed approach localizes \emph{all} failures to the associated areas, i.e., the tie-line constraints are never lifted. This localization phenomenon can more clearly be noticed in Fig.~\ref{fig:n_gen}(a), which shows the CCDF of the number of generators whose operating points are adjusted in response to the initial failures. The majority of failures lead to operating point adjustments on less than 15 generators, which is roughly the number of generators within a single balancing area. The small portion of failures that impact more than 15 generators are bridge failures, which by definition have two associated areas and thus more associated generators.

\begin{table}[t]
\def\arraystretch{1.4}
\centering{
\caption{Statistics on failure localization over the IEEE 118-bus test network.}\label{table:case118}
\begin{tabular}{|c|c|c|c|}
\hline
 Line Capacity & $\alpha=0.9$ & $\alpha=0.8$ & $\alpha=0.7$\\
\hline
Avg. \% of Vul. Sce. & 3.53 & 3.68 & 3.82 \\
\hline
Avg. (Max.) LLR(\%) & 0.55 (1.06) & 0.56 (2.17) & 0.59 (2.21) \\
\hline
\end{tabular}
}
\end{table}


For the Polish network, we switch off 78 transmission lines from the original network, creating a tree-partitioned network with 4 areas of 1430, 818, 359 and 129 buses respectively. Similar to the setup for the IEEE-118 test network, the generation capacities are scaled properly so that the total generation reserve is roughly 20\%, and the line capacities are scaled down to $\alpha=0.9, 0.8, 0.7$ to create different congestion levels. We then iterate over all single line failures and the statistics from our experiments are summarized in Table~\ref{table:case2736}. Our results show that for this test network, more than 86\% of the single line failures can be mitigated locally within a single area for all congestion levels. In addition, the worst case LLR is roughly 3\% across all simulated scenarios, with an average that is no higher than 0.07\%. Similar to the IEEE 118-bus test network, the number of generators whose operating points are adjusted by the proposed control strategy in response to the failures is small, as shown in  Fig.~\ref{fig:n_gen}(b), confirming failure localization.

\begin{table}[t]
\def\arraystretch{1.3}
\centering{
\caption{Statistics on failure localization over the Polish network.}\label{table:case2736}
\begin{tabular}{|p{3cm}|c|c|c|}
\hline
\hspace{.7cm}Line Capacity & $\alpha=0.9$ & $\alpha=0.8$ & $\alpha=0.7$\\
\hline
Scenarios Mitigated with one Area (\%) & 92.39 & 88.63 & 86.91 \\
\hline
Scenarios Mitigated with 2-3 Areas (\%) & 6.44 & 9.48 & 10.40 \\
\hline
Scenarios Mitigated with All Areas (\%) & 1.17 & 1.89 & 2.69 \\
\hline
Avg. (Max.) LLR(\%) & 0.05 (2.93) & 0.05 (2.94) & 0.07 (3.24) \\
\hline
Avg. \# of Gen. Adj. & 6.52 & 11.66 & 16.37 \\
\hline
\end{tabular}
}
\end{table}

\section{Conclusion}\label{section:conclusion}
In this paper, we have proposed a complementary approach to grid reliability by integrating the network bridge-block decomposition and the unified controller for frequency regulation to achieve fast-timescale failure control. It provides strong analytical guarantees of both the localization and mitigation of failures. Our case studies on the IEEE 118-bus and 2736-bus test systems show that the proposed control scheme can greatly improve overall reliability compared to the current practice. In particular, the new control prevents successive failures from happening while localizing the impacts of initial failures. When load shedding is inevitable, the proposed strategy incurs significantly less load loss.%



\begin{appendices}
\section{Recovering Previous Models}\label{section:previous_model}

The dynamic model \eqref{eqn:swing_and_network_dynamics} in Section \ref{section:failure_model} models secondary frequency control where the frequency deviations $\omega(t)$ are driven to zero.  When we focus on controllers that only achieve primary frequency control, the equilibrium frequency $\omega^*$ may be nonzero. That is, as the system converges in this sense, the phase angles $\theta^*(t)$ do not necessarily stay at a constant value, but may change in constant rate over time. In such context, we can modify \eqref{eqn:swing_and_network_dynamics} as follows to describe primary frequency dynamics:
\begin{subequations}\label{eqn:swing_and_network_dynamics_primary}
	\begin{IEEEeqnarray}{rCll}
		M_j\dot{\omega}_j &=& r_j + d_j - D_j\omega_j - \sum_{e\in\calE}C_{je}f_e,&\quad j\in\calN \label{eqn:swing_dynamics_p}\\
		{f}_{ij}&=&B_{ij}(\theta_i-\theta_j),\quad& (i,j)\in\calE. \label{eqn:network_flow_dynamics_p}
	\end{IEEEeqnarray}
\end{subequations}

By relaxing the requirement on $\omega^*=0$ at equilibrium, the above model enables extra freedom in the choice of $d_j$. We now show that by
using the classical droop control \cite{bergen2009power} as the dynamics for $d_j$'s in \eqref{eqn:swing_and_network_dynamics_primary}, the cascading failure models from~\cite{part2} and previous literature such as \cite{soltan2015analysis, yan2015cascading} can be readily recovered. Indeed, as shown in \cite{zhao2016unified}, the \emph{closed-loop} equilibrium of \eqref{eqn:swing_and_network_dynamics_primary} under droop control is the unique\footnote{The equilibrium is unique up to an arbitrary reference phase angle.} optimal solution to the following optimization on the post-contingency network:
\begin{subequations}\label{eqn:droop_olc}
\begin{IEEEeqnarray}{ll}
\min_{\theta, \omega, d, f} \quad & \sum_{j\in\calN}\frac{d_j^2}{2Z_j}+\frac{D_jw_j^2}{2}\label{eqn:droop_obj}\\
 \hspace{.2cm}\text{s.t.}& r - d - D\omega = Cf\label{eqn:droop_balance}\\
	& f - BC^T \theta = 0\label{eqn:droop_branch}\\
	&\ul{p}_j\le r_j - d_j \le \ol{p}_j, \quad j\in\calN, \label{eqn:droop_control_limits}
\end{IEEEeqnarray}
\end{subequations}
where $Z_j$'s are the generators' participation factors \cite{bergen2009power}. By plugging \eqref{eqn:droop_branch} into \eqref{eqn:droop_balance}, it is easy to check that any feasible point $x=(\theta, \omega, d, f)$ of \eqref{eqn:droop_olc} satisfies $\sum_j r_j=\sum_j (d_j + D_j\omega_j)$. 
Cauchy-Schwarz inequality then implies that
\begin{IEEEeqnarray*}{rCl}
\Big( \sum_{j\in\calN} r_j \Big)^2&=& \Big[ \sum_{j\in\calN} \paren{d_j+D_j\omega_j}\Big]^2\\
&\le & \sum_{j\in\calN}\Big (\frac{d_j^2}{2Z_j}+\frac{D_j\omega_j^2}{2} \Big)\sum_{j\in\calN}\paren{2Z_j+2D_j},
\end{IEEEeqnarray*}
for which  equality holds if and only if 
\begin{equation}\label{eqn:droop_optimal_point}
d_j=\frac{Z_j}{\sum_j \paren{Z_j+D_j}}\sum_j r_j, \quad \omega_j=\frac{\sum_j r_j}{\sum_j \paren{Z_j+D_j}}.
\end{equation}
Therefore, if the control limits \eqref{eqn:droop_control_limits} are not active, \eqref{eqn:droop_optimal_point} is always satisfied at the optimal point $x^*=(\theta^*,\omega^*, d^*, f^*)$.

Now, consider a line $e$ being tripped from the transmission network $\calG$, and for simplicity assume the control limits \eqref{eqn:droop_control_limits} are not active. If $e$ is a bridge, the tripping of $e$ results in two islands of $\calG$, say $\calD_1$ and $\calD_2$, and two optimization problems \eqref{eqn:droop_olc} correspondingly. For $l=1,2$, $\sum_{j\in\calD_l} r_j$ represents the total net power imbalance in $\calD_l$, and therefore \eqref{eqn:droop_optimal_point} implies that droop control adjusts the system injections so that the power imbalance is distributed to all generators proportional to their participation factors in both $\calD_1$ and $\calD_2$. If $e=(i,j)$ is not a bridge, denoting the original flow on $e$ before it is tripped as $f_e$, then $r_i=f_e$, $r_j=-f_e$ and $r_k=0$ otherwise. As a result, we have $\sum_{j\in\calN}r_j = 0$ in this case and thus \eqref{eqn:droop_optimal_point} implies the system operating point remains unchanged in equilibrium, i.e., $d_j=\omega_j=0,\forall j\in \calN$. Moreover, one can show that this still holds when \eqref{eqn:droop_control_limits} is active with a more involved analysis on the KKT conditions of \eqref{eqn:droop_olc}.
This droop control mechanism recovers the failure propagation model in~\cite{part2} and underlies some of previous results in the literature on cascading failures in power systems \cite{bernstein2014power,soltan2015analysis, yan2015cascading}. 
\end{appendices}

\bibliographystyle{IEEEtran}
\bibliography{biblio}

\end{document}